\newcommand{\bI}{\boldsymbol{I}}
\newcommand{\bR}{\boldsymbol{R}}
\newcommand{\bx}{\boldsymbol{x}}
\newcommand{\cD}{\mathcal{D}}
\newcommand{\cO}{\mathcal{O}}
\newcommand{\subjectto}
\def\BibTeX{{\rm B\kern-.05em{\sc i\kern-.025em b}\kern-.08em
    T\kern-.1667em\lower.7ex\hbox{E}\kern-.125emX}}
\newtheorem{theorem}{Theorem}
\newtheorem{proposition}[theorem]{\textbf{Lemma}}%
\begin{document}

\title{A Deployable Quantum Access Points Selection  Algorithm for Large-Scale Localization}
\author{\IEEEauthorblockN{Ahmed Shokry}
\IEEEauthorblockA{\textit{Computer Science and Engineering} \\
\textit{Pennsylvania State University}\\
PA, USA \\
ahmed.shokry@psu.edu}
\and
\IEEEauthorblockN{Moustafa youssef}
\IEEEauthorblockA{\textit{Computer Science and Engineering} \\
\textit{American University in Cairo}\\
Cairo, Egypt \\
moustafa-youssef@aucegypt.edu}
}

\maketitle

\begin{abstract}
Effective access points (APs) selection is a crucial step in  localization systems. It directly affects both localization accuracy and computational efficiency. Classical APs selection algorithms are usually computationally expensive, hindering the deployment of localization systems in a large worldwide scale. 

In this paper, we introduce a quantum APs selection algorithm for large-scale localization systems. The proposed algorithm leverages quantum annealing to eliminate redundant and noisy APs. We explain how to formulate the APs selection problem as a quadratic unconstrained binary optimization (QUBO) problem, suitable for quantum annealing, and how to select the minimum number of APs that maintain the same overall localization system accuracy as the complete APs set. Based on this, we further propose a logarithmic-complexity algorithm to select the optimal number of APs. 

We implement our quantum algorithm on a real D-Wave Systems quantum machine and assess its performance in a real test environment for a floor localization problem. Our findings reveal that by selecting fewer than 14\% of the available APs in the environment, our quantum algorithm achieves the same floor localization accuracy as utilizing the entire set of APs and a superior accuracy over utilizing the reduced dataset by  classical APs selection counterparts. Moreover, the proposed quantum algorithm achieves more than an order of magnitude speedup over the corresponding classical APs selection algorithms, emphasizing the  efficiency of the proposed quantum algorithm for large-scale localization systems.

\end{abstract}

\begin{IEEEkeywords}
quantum location determination, access point selection, quantum annealing, QUBO, next-generation location tracking systems
\end{IEEEkeywords}

\section{Introduction}
Large-scale fingerprinting localization systems are commonly employed both indoors \cite{youssef2015towards,shokry2017tale} and outdoors \cite{shokry2018deeploc, ibrahim2011cellsense} due to their high accuracy. These systems utilize the Radio Frequency (RF) technology and follow a two-stage process. In the offline phase, they create an ``RF fingerprint'' by capturing the received signal strength (RSS) signature from different access points (APs)  at different  locations in the area of interest and store this information in a database. These APs can include WiFi access points, cell towers, or Bluetooth beacons. In the online tracking phase, the RSS received from the APs at the user's device with unknown location is compared with the fingerprint samples in the database to identify the location closest, in the RSS space, to the unknown location.

As the fingerprinting localization techniques need to compare online RSS measurements with offline ones at every fingerprint location, their time and space complexity increase with the increase in number of APs. Classical algorithms try to infer a promising reduced subset of APs by adding or removing APs and comparing the results~\cite{jovic2015review}. But ultimately, when the number of APs is large (especially in a worldwide localization system), this exhaustive exploration of all combinations of APs becomes computationally prohibitive and a trade-off remains between the accuracy of selecting the effective APs and the effort spent finding the best subset of APs. This complexity does not allow for scaling  of classical APs selection techniques to support large-scale indoor/outdoor localization systems on a global scale, e.g. in IoT environments where there may be a significant number of APs with wireless interfaces.

Recently, there have been successful proposals of quantum fingerprint matching algorithms in the literature \cite{quantum_arx, quantum_vision, device_indp_q, SHOKRY2023, quantum_lcn, quantum_qce, shokry2023qradar}. These techniques leverage universal \textbf{gate-based} quantum machines to achieve exponential saving in matching time and space in terms of the number of APs and in terms of the number of fingerprint locations~\cite{zook2023quantum}. However, due to quantum decoherence~\cite{schlosshauer2005decoherence}, the current gate-based quantum machines can only handle a limited number of qubits accurately, which restricts the deployment of large-scale localization systems \textit{in the near future} on them.

In this paper, we propose a novel approach for APs selection that can help in reducing the computational power of fingerprinting-based localization algorithm. Specifically, we propose a quantum APs selection algorithm that leverages quantum annealing to eliminate redundant and noisy APs which results in improving the overall localization accuracy. We present a formulation of the APs selection problem as a Quadratic Unconstrained Binary Optimization (QUBO) problem~\cite{QUBO1, QUBO2}, suitable for processing on quantum annealers, with the aim of selecting the powerful APs that improves overall localization system efficiency. 
The proposed quantum algorithm utilizes binary variables to denote APs selection and formulates an objective function that tries to select APs that have a strong effect on the location information (i.e. maximize important APs) and have a weak correlation between each others (i.e. minimize redundancy).
Furthermore, we present an algorithm designed to identify the minimum set of powerful APs necessary to enhance localization accuracy. 
Our proposed algorithms leverages the real D-Wave Systems quantum machines, which are  \textbf{\textit{specialized practical}} quantum computers for solving quantum annealing problems~\cite{Dwave}. 

Results from deploying the proposed quantum algorithm and evaluating its performance in a real testbed for a floor identification task reveal that by selecting fewer than 14\% of the available APs in the environment, our quantum algorithm achieves the same floor localization accuracy as utilizing the entire set of APs and a superior accuracy over utilizing the reduced dataset by classical APs selection counterparts. Moreover, the proposed quantum algorithm achieves a speedup of more than an order of magnitude over the corresponding classical APs selection algorithms.

The rest of the paper is organized as follows: Section~\ref{sec:background} gives a background on  quantum computing and the floor localization problem. Section~\ref{sec:method} provides the details of our quantum APs selection algorithm. We present the details of the implementation and evaluation results in Section~\ref{sec:evaluation}. Finally, Section \ref{sec:conclude} concludes the paper.

\section{Background}
\label{sec:background}
In this section, we start by a background on quantum computing and quantum annealing. Then, we give a brief introduction to the floor estimation problem, which is the case study we apply our APs selection algorithm to.

\subsection{Quantum Computing and Quantum Annealing}

Quantum computing represents a paradigm shift in computational theory and practice, harnessing the principles of quantum mechanics to enable the solution of complex problems that classical computers struggle with. At its core, quantum computing leverages quantum bits (qubits) that, unlike classical bits, can exist in a \textit{superposition} state of 0 and 1 at the same time, exponentially increasing computational capacity~\cite{nielsen2002quantum}. 

Quantum \textit{entanglement} is another phenomenon where two or more qubits become correlated and the state of one qubit instantaneously influences the state of another, regardless of the distance between them~\cite{nielsen2002quantum}. This entanglement property enables the creation of highly-correlated quantum states, a phenomenon that is fundamental to the efficiency of quantum algorithms. The power of quantum computing lies in its ability to perform parallel computations due to superposition and to exploit quantum entanglement for correlations between qubits.

Quantum annealing is a specialized approach within quantum computing~\cite{qa}, tailored for solving optimization problems which involve finding the best solution from a vast number of possibilities. Quantum annealing has been used in various fields including finance~\cite{finance} and artificial intelligence~\cite{nath2021review}. It follows the principles of adiabatic quantum computation~\cite{qa}, where a system is slowly evolved from a simple Hamiltonian (i.e. energy function) to a complex one, reaching the ground state that encodes the solution. Adiabatic quantum computation involves initializing a quantum system in the ground state of a simple Hamiltonian ($H_0$), and slowly transforming it into the ground state of a more complex final Hamiltonian ($H_f$). This adiabatic evolution allows the system to settle into the state representing the optimal solution to an optimization problem. The simple Hamiltonian $H_0$ represents the initial state of the quantum system. It is carefully chosen such that its ground state is easy to prepare, typically a state where each qubit is in a simple state. The final Hamiltonian $H_f$ encodes the optimization problem. Its ground state contains information about the optimal solution to the problem. The adiabatic quantum computation ensures that the system evolves into this final ground state, capturing the optimal solution.

The evolution of a quantum system during annealing is described by the Schrödinger equation~\cite{griffiths2018introduction}. For a quantum state $| \psi(t) \rangle$, the time-dependent Schrödinger equation is given by:
\begin{equation}
    i \hbar \frac{\partial}{\partial t} | \psi(t) \rangle = H(t) | \psi(t) \rangle
\end{equation}
Here, $i$ is the imaginary unit, $\hbar$ is the reduced Planck constant, $H(t)$ is the time-dependent Hamiltonian operator representing the system's energy, and $\ket{\psi(t)}$ is the state of the quantum system at time $t$. The Hamiltonian operator is typically expressed as a combination of the simple Hamiltonian ($H_0$) and the final Hamiltonian ($H_f$). The annealing schedule is controlled by a parameter $s(t)$, which varies from 0 to 1 as the system evolves. The time-dependent Hamiltonian is then defined as:
\begin{equation}
    H(t) = (1 - s(t)) H_0 + s(t) H_f
\end{equation}

The objective is to start with the simple Hamiltonian ($H_0$) and gradually transition to the final Hamiltonian ($H_f$) to guide the system towards the optimal solution. This process exploits quantum phenomena such as superposition and entanglement to explore a vast solution space more efficiently than classical methods.

The previous construction plays an important role not only in the realm of quantum mechanics, but also in the field of optimization. Nowadays, many real-world optimization problems can be expressed as QUBO problems~\cite{QUBO1, QUBO2}, a class of problems that involves finding the minimum (or maximum) of a quadratic objective function subject to binary constraints. Formally, a QUBO problem can be defined as:

\begin{equation}
    \text{Minimize } f(\mathbf{x}) = \sum_{i,j} Q_{ij}x_i x_j + \sum_i P_i x_i
\end{equation}

where $\mathbf{x}$ is a binary vector, $Q_{ij}$ are coefficients representing the quadratic terms, and $P_i$ are coefficients representing the linear terms.
QUBO problems can be reformulated as an Ising Hamiltonian, which is a mathematical model that provides a simplified representation of magnetic interactions in a system of atomic spins~\cite{ISING}. The basic idea is to map the QUBO problem onto a physical system described by the Ising model, where the spins of magnetic particles represent variables and their interactions encode the problem's constraints and objectives. In this mapping, binary variables correspond to spins, and the QUBO objective function maps onto the energy function of the Ising model (i.e. Hamiltonian). This transformation allows quantum annealers, which naturally operate with Ising models, to be employed in solving QUBO problems. Quantum annealers offer a \textit{practical} avenue for solving QUBO problems. The encoding of the QUBO problem into the Ising model allows quantum annealers to explore the solution space through quantum fluctuations, seeking the optimal configuration.

\subsection{The Floor Estimation Problem}
The floor estimation problem~\cite{main_classical_paper, fi_fingerprint_2018, fi_fingerprint_ka, VIFI_fingerprint_2019} relies on utilizing pre-installed signal sources (e.g., WiFi APs or cellular base stations) to determine users' floor. Typically, the floor estimation algorithms operate in two phases: (a) Offline calibration phase: During this phase, signal information such as RSS and signal source IDs are collected at various known locations on each floor to construct a fingerprint. Each fingerprint is stored as vectors, where each entry represents the RSS from one AP, along with the corresponding floor number. (b) Online tracking phase: In this phase, the information from online signal sources is compared to the fingerprint vectors, and the closest floor in the signal space is reported as the estimated floor.

\section{The Proposed Quantum APs Selection Algorithm}
\label{sec:method}

\begin{table}[!t]\vspace{+0.3cm} \caption{ Table of notations.}\label{tab:notations}
\centering
 \begin{tabular}{|l | l|} 
 \hline
 \textbf{Symbol} & \textbf{Description} \\ [0.5ex] 
 \hline\hline
 $m$ & Number of fingerprint samples \\   \hline

$n$ & Number of APs \\  \hline

$r$ & A fingerprint RSS sample vector \\  \hline

$b$ & Number of bins in the RSS range \\  \hline 

$l$ & A fingerprint floor label \\  \hline

$f$ & Number of distinct locations in the fingerprint \\  \hline 

$\cD=\lbrace(r,l)\rbrace$ & Original fingerprint \\  \hline

$\cD_S=\lbrace (r_S,l)\rbrace$ & Reduced fingerprint after selecting effective S APs \\   \hline

$I_i$ & Importance of AP $i$ \\  \hline 

$R_{ij}$ & Mutual correlation between two APs (redundancy)\\   \hline

$\alpha$ & Balancing parameter (importance and redundancy) \\  \hline

$\bx^*$ & A binary vector indicating the selected APs \\  \hline

$k = \lVert\bx^*\rVert_1$  & Number of selected APs \\ [1ex] 
 \hline
 \end{tabular}
\end{table}

In this section, we describe our proposed APs selection algorithm.
 We start with formulating APs selection problem as a QUBO problem. Then, we describe how to select the minimum number of APs that achieve the best localization accuracy. Based on this, we present our APs selection algorithm. 
 Table~\ref{tab:notations}  summarizes the notations used in this section. 
 
\subsection{QUBO Formulation}
\label{sec:qfs}

\begin{figure*}[!t]
  \begin{subfigure}{0.49\textwidth}
\includegraphics[width=\linewidth]{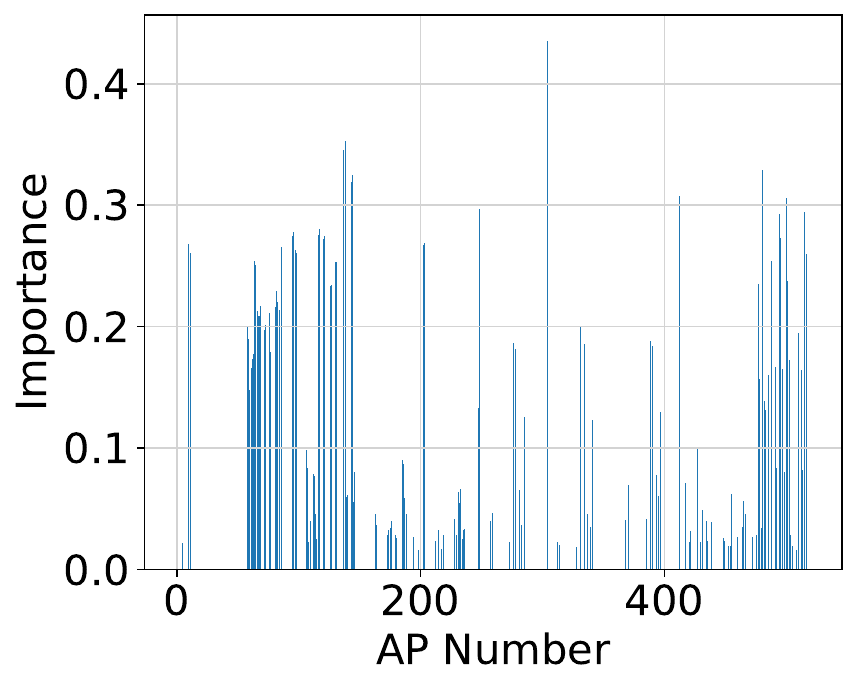}
    \caption{Importance} \label{subfig:imp}
  \end{subfigure}%
  \begin{subfigure}{0.49\textwidth}
    \includegraphics[width=\linewidth]{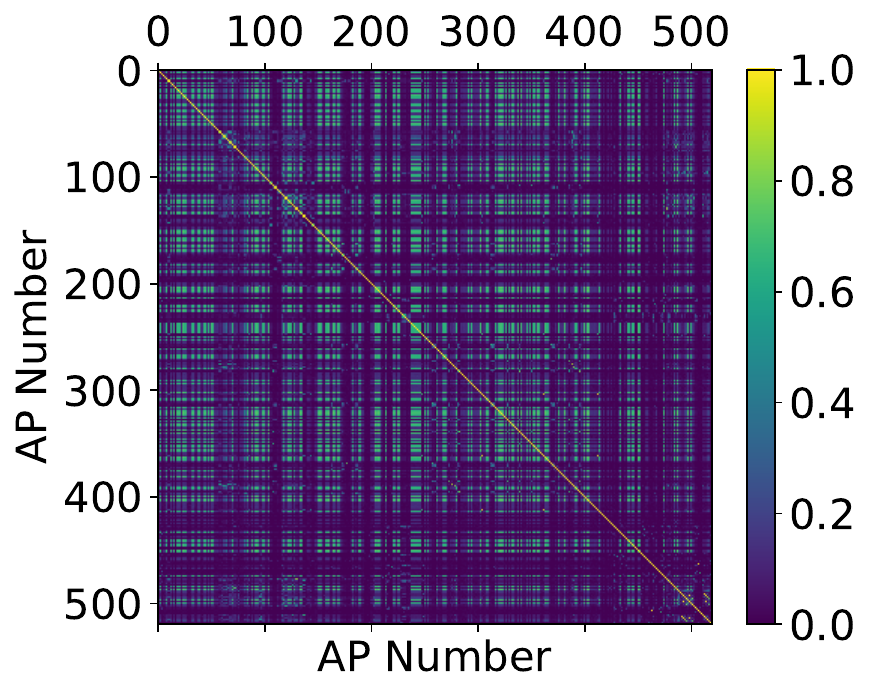}
    \caption{Redundancy} \label{subfig:red}
  \end{subfigure}%
\caption{Importance and Redundancy in the full dataset. The importance reflects the influence of each AP on estimating the location. The Redundancy reflects APs inter-dependency. } \label{fig:red_imp_full1}
\end{figure*}

Assume a localization problem on a fingerprint with $m$ fingerprint samples and $n$ APs,  $\cD=\lbrace(r^i,l^i)\rbrace_{i\in \{1, 2, \dots, m\}}$ with $n$-dimensional APs received signal strength (RSS) vector $r^i \in \mathbb{R}^n$ and floor information labels $l^i \in \mathbb{N}$ for all $i\in\{1,2,3,\dots, m\}$.
The APs selection problem tries to find a subset of APs $S\subset \{1,2,3,\dots,n\}$, such that the reduced dataset $\cD_S=\lbrace (r^i_S,l^i)\rbrace_{i\in \{1,2,3,\dots,m\}}$ with the reduced dimension vector $r_S$ referring to the RSS from the selected APs only; leads to comparable localization accuracy as the full original RSS vectors $r$ with all APs.
Typically, this subset is found by searching for the ``optimal'' subset of APs that lead to the best localization accuracy. 

Our APs selection problem goal is to select APs that have the highest influence on estimating the location (i.e., highest importance). It also tries to reduce the APs that depend on each others (i.e., lowest redundancy) as shown in Figure~\ref{fig:red_imp_full1}. We propose a quantum algorithm that targets maximizing the importance and minimize the redundancy of APs. Specifically, we represent the APs subset $S$ as a binary indicator vector $\bx=(x_1,\dots,x_n)\in\lbrace 0,1\rbrace^n$, such that AP $i\in S$ if and only if $x_i=1$ for all $i \in \{1,2,3,\dots,n\}$. Therefore, the optimal subset of APs, $x^*$ can be obtained as

\begin{equation}\label{eq:qfs}
	\bx^* = \underset{\bx\in\lbrace 0,1\rbrace^n}{\arg\min} \, Q(\bx,\alpha)
\end{equation}

Where our QUBO objective function, $Q(\bx,\alpha)$, is defined as

\begin{equation}\label{eq:Q}
	Q(\bx,\alpha) = - \alpha \sum_{i=1}^n I_i x_i + (1-\alpha) \sum_{i,j=1}^n R_{ij} x_i x_j,
\end{equation}

where the first sum of linear terms represents the individual contributions from APs (i.e., importance, $I_i$) and the second sum contains quadratic interaction terms between APs pairs (i.e., redundancy, $R_{ij}$). The user-defined parameter $\alpha \in [0,1]$ balances the influence of the two terms: the importance and the redundancy.
Finding the set of binary variables $\bx_i$ that minimizes this objective function is equivalent to finding the  optimal set of APs.

In the balance of this section, we will give the details of how Importance and Redundancy are calculated in our algorithm. Finally, we will provide an algorithm for controlling the number of selected APs.

\subsubsection{Importance}
We define the importance of each AP $i$ as the Cramér's Variation (Cramér's V) measure of the signal strength association with the floor labels 
\cite{cramer1999mathematical}. This measure is an extension of chi-squared test, which is used to determine if there is a significant association between two categorical variables. The Cramer's V measure further standardizes the chi-squared statistic to a value between 0 and 1, making it easier to interpret; regardless of the specific variables distributions or scales.

Hence, our Importance measure ranges from 0 to 1, where 0 indicates no association between the AP RSS and floor numbers, and 1 indicates a strong association. 

Specifically, the Importance between an AP $i$ RSS values and the fingerprint location labels $l$ can then be calculated as
\begin{equation}
I_i = \sqrt{\frac{\chi^2}{m \times \min(\text{{b-1}}, \text{{f-1}})}}    
\end{equation}
Where $\chi^2$ is the chi-square statistic which measures the discrepancy 
between the observed frequencies of AP $i$ RSS with the output labels, and the frequencies that would be expected if the two variables were independent. $m$ is the number of fingerprint samples, $f$ is the number of distinct labels, and $b$ is the number of possible RSS values. 

Figure~\ref{subfig:imp} shows the importance of all APs within the environment. The figure shows that the fingerprint contains many APs with low importance. These APs can be removed to reduce computational complexity without affecting the accuracy of the localization system.

\subsubsection{Redundancy}
We define the redundancy between each pair of APs, $i$ and $j$, as the strength of the absolute correlation between them.

\begin{equation}\label{eq:Rij}
	R_{ij} = \text{$|$Corr(}r_i,r_j\text{)$|$} \geq 0
\end{equation}

where $\bR\in\mathbb{R}^{n \times n}$ is symmetric and positive semi-definite matrix.
This matrix represents the absolute mutual correlation $|Corr(r_i,r_j)|$ among the  APs RSS and therefore measures their redundancy.
For $i=j$, we set $R_{ii}=0$. 
Without loss of generality,  we use Pearson correlation as the measure of redundancy (\textit{Corr(.)})~\cite{pearson}. 
Figure~\ref{subfig:red} shows the correlation between each pair of APs within the environment. The figure shows that the fingerprint contains many APs pairs that depend on each other (color towards yellow). This also adds to the computational cost on the localization system without a real gain on accuracy.

\subsection{Objective Function in Matrix Form}
The QUBO objective function in Eq~\ref{eq:Q} can be written a matrix form $\mathbf{P}(\alpha)\in\mathbb{R}^{n \times n}$ for better mathematical manipulation, where a diagonal item $\mathbf{P}_{ii}$ presents the importance of each AP $i$ (i.e. $I_i$), and a non-diagonal item $\mathbf{P}_{ij}$ is the redundancy (i.e. $R_{ij}$). Hence, the optimal subset of APs in Eq~\ref{eq:qfs} can be expressed in matrix form as
\begin{equation}
	\bx^* = \underset{\bx\in\lbrace 0,1\rbrace^n}{\arg\min} \ \bx^T \mathbf{P}(\alpha) \bx 
\end{equation}
\begin{equation}\label{eq:Qij}
	P_{ij}(\alpha) = R_{ij}-\alpha ( R_{ij} + \delta_{ij} I_i ),
\end{equation}
where $\delta_{ij} = 1_{\{i=j\}}$ denotes the Kronecker delta.

\subsection{Controlling the Number of Selected APs}
\label{control_aps}
In addressing our APs selection problem, the objective is to choose the minimum number of APs that yield a certain localization accuracy. This requirement can be incorporated by introducing a constraint to Eq~\ref{eq:qfs}, ensuring the selection of precisely $k$ APs. One common strategy is to append a Lagrangian penalty term~\cite{lagrange}, such as $\lambda\left(\left(\sum_ix_i\right)-k\right)^2$, to $Q(\bx,\alpha)$. This penalty term is only minimized when exactly $k$ APs are selected, where $\lambda$ controls how important it is to exactly match the constraint~\cite{lagrange}. Solving the problem yields the desired solution. However, two primary challenges emerge from this approach. First, determining an appropriate value for $\lambda$ is not a straightforward. Second, it is uncertain whether a manually-selected $k$ will lead to the best localization accuracy~\cite{FS}.

Facing these challenges, we present an alternative approach for determining the number of selected APs. Instead of relying on a penalty term, we exploit the intrinsic connection between the parameter $\alpha$, controls the balance between importance and redundancy, and the number of APs in the solution ($\lVert\bx^*\rVert_1$). This premise is substantiated by considering the extreme values of $\alpha \in [0,1]$. Specifically, setting $\alpha=0$ renders all diagonal entries of $\mathbf{P}(0)$ as $0$, emphasizing redundancy. Naturally, both the empty set of features and any single AP are minimally redundant, resulting in the optimal selection of APs being either $\lbrace\emptyset\rbrace$ or $\lbrace\lbrace i\rbrace; i\in \{1,2,3,\dots,n\}\rbrace$. Conversely, when $\alpha=1$, the problem becomes linear with solely negative coefficients, leading to the selection of all APs as the optimal solution. Consequently, by iteratively adjusting $\alpha$ from $0$ to $1$ in sufficiently small increments, we observe a monotonic increase of the number of selected APs in the optimal solution 
in steps of one from $0$ to $n$. This insight implies that we can modify $\alpha$ to obtain a subset of any desired size $k\in\lbrace 0,1,\dots,n\rbrace$ as we quantify in Section~\ref{sec:evaluation}. The following lemma proves that we can control the number of selected APs using $\alpha$.

\begin{proposition}\label{prop:alpha}
	For all $Q(\cdot, \alpha)$ defined as in Eq~\ref{eq:Q} and $k\in\lbrace 0,1,\dots,n\rbrace$, there exist an $\alpha\in [0,1]$ such that $\bx^*\in\arg\min_{\bx}Q(\bx,\alpha)$ and $\norm{\bx^*}_1=k$. Moreover, the number of selected APs $k$ monotonically increases with $\alpha$.
\end{proposition}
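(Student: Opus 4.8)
\emph{Proof plan.} The engine of the argument is that, for each fixed $\bx\in\lbrace 0,1\rbrace^n$, the map $\alpha\mapsto Q(\bx,\alpha)$ is \emph{affine}: abbreviating $I(\bx)=\sum_i I_ix_i$ and $R(\bx)=\sum_{i,j}R_{ij}x_ix_j$, we have $Q(\bx,\alpha)=R(\bx)-\alpha\,(I(\bx)+R(\bx))$, with $I(\bx)\ge 0$ and $R(\bx)\ge 0$. Hence $g(\alpha):=\min_{\bx}Q(\bx,\alpha)$, being the pointwise minimum of finitely many affine functions, is concave and piecewise linear on $[0,1]$, with finitely many breakpoints $0=a_0<a_1<\dots<a_M=1$; on each open piece $(a_{t-1},a_t)$ a single minimizer $\bx^{(t)}$ (of cardinality $s_t:=\norm{\bx^{(t)}}_1$) is active, and at a breakpoint $a_t$ both $\bx^{(t)}$ and $\bx^{(t+1)}$ (and possibly more sets) are simultaneously minimizers. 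The two endpoints are then immediate: at $\alpha=0$, $Q(\bx,0)=R(\bx)\ge 0$ with equality at $\bx=\mathbf{0}$ (indeed also at every singleton, since $R_{ii}=0$), so $\bx^*=\mathbf{0}$ is a minimizer and $k=0$ is realized; at $\alpha=1$, $Q(\bx,1)=-I(\bx)\le 0$ is minimized by $x_i=1$ for all $i$ (all $I_i\ge 0$), so $\bx^*=\mathbf{1}$ is a minimizer and $k=n$ is realized.

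For the monotonicity claim I would compare two parameters $\alpha<\alpha'$ with minimizers $\bx,\bx'$ respectively. Adding the optimality inequalities $Q(\bx,\alpha)\le Q(\bx',\alpha)$ and $Q(\bx',\alpha')\le Q(\bx,\alpha')$ and using affineness yields $I(\bx)+R(\bx)\le I(\bx')+R(\bx')$ — equivalently, the slope $-(I(\bx)+R(\bx))$ of the active line decreases as $\alpha$ grows, which is just the concavity of $g$. I would upgrade this to cardinality monotonicity with a local exchange argument using the complementary-slackness-type conditions at an optimum: for a selected AP $i$, dropping it cannot help, so $\alpha I_i\ge 2(1-\alpha)\sum_{j}R_{ij}x_j$, while for an unselected AP $i$, adding it cannot help, so $2(1-\alpha)\sum_j R_{ij}x_j\ge \alpha I_i$. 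Raising $\alpha$ strictly increases the inclusion benefit $\alpha I_i$ of every AP while scaling down its redundancy penalty, and I would use this to conclude that $s_t$ is non-decreasing in $t$.

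The remaining — and, I expect, genuinely hard — step is to show that no cardinality is skipped, i.e.\ that crossing a breakpoint changes the optimal cardinality by exactly one, so that the non-decreasing integer sequence $s_1\le s_2\le\dots\le s_M$, which runs from $0$ up to $n$, visits every value in $\lbrace 0,1,\dots,n\rbrace$. Since at a breakpoint $a_t$ the sets $\bx^{(t)}$ and $\bx^{(t+1)}$ are both minimizers, it suffices to prove: whenever $A$ and $B$ both minimize $Q(\cdot,\alpha)$ with $\norm{A}_1<\norm{B}_1$, there is a minimizer of cardinality $\norm{A}_1+1$. I would attempt this by building, between $A$ and $B$, a chain of sets whose cardinalities increase by one on which $Q(\cdot,\alpha)$ is convex in the step index — so that, being equal to the optimum at the two ends of the chain, it equals the optimum all along it — with the convexity coming from $R_{ij}\ge 0$ (the marginal redundancy of an AP can only grow as the working set grows). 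This is precisely the ``increase in steps of one'' behaviour reported empirically in Section~\ref{sec:evaluation}; I note that this no-skipping property leans on the sign structure of $\bR$ (a matrix of absolute correlations) and is the one place in the proof where a routine concavity/endpoint argument does not by itself suffice.
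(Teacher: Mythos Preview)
Your plan shares the paper's core ingredients --- the affineness of $\alpha\mapsto Q(\bx,\alpha)$, the endpoint analysis at $\alpha\in\{0,1\}$, and the resulting piecewise-linear structure --- but the decomposition is genuinely different. The paper does not track the unconstrained lower envelope $g(\alpha)=\min_{\bx}Q(\bx,\alpha)$ and its breakpoints; instead it introduces the \emph{family of constrained minima}
\[
Q^*_{\leq k}(\alpha)\;:=\;\min_{\norm{\bx}_1\leq k}Q(\bx,\alpha),\qquad k=0,1,\dots,n,
\]
notes that each is piecewise linear and non-increasing in $\alpha$ (since $\partial Q/\partial\alpha=-(R(\bx)+I(\bx))\leq 0$), and that they are nested, $Q^*_{\leq k'}\leq Q^*_{\leq k}$ for $k<k'$. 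The existence of an $\alpha$ realizing cardinality $k$ is then read off from where $Q^*_{\leq k}$ coincides with the unconstrained minimum $Q^*_{\leq n}$. This sidesteps your breakpoint-by-breakpoint chain argument: rather than showing that consecutive active sets differ in size by exactly one, the paper argues via the ordering of the curves $Q^*_{\leq k}$, deferring the piecewise-linear/monotone machinery to a citation. Your route is more explicit about \emph{where} the difficulty sits --- you correctly isolate the ``no-skipping'' step as the only non-routine part and tie it to $R_{ij}\geq 0$ --- whereas the paper's argument absorbs this into the nesting of the $Q^*_{\leq k}$ and is terse on precisely that point. One caution on your side: the one-flip inequalities you write down certify only \emph{local} optimality, so the ``raising $\alpha$ makes inclusion more attractive'' heuristic does not by itself upgrade slope monotonicity of $g$ to cardinality monotonicity of the \emph{global} minimizer; you still need either your chain/interpolation step or the paper's $Q^*_{\leq k}$ device to close that gap.
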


\begin{proof}
The proof can be found in Appendix~\ref{sec:proof}.
\end{proof}

This lemma shows that we do not need additional constraints on the QUBO instance to control the number of features present in the global optimum. Hence, we can determine a suitable value for $\alpha$ ($\alpha^*$) that results in the minimum number of selected APs ($k$) in $\cO(\log n)$ steps using binary search. 
It is important to note that the value of $\alpha^*$ may not be singular. The outlined algorithm is provided in Alg~\ref{alg:alpha}.

\begin{algorithm}
    \caption{Binary Search for $k$ Optimal APs}
    \SetAlgoLined
    \SetKwInOut{Input}{Input}
    \SetKwInOut{Output}{Output}
    \DontPrintSemicolon
    
    \Input{Redundancy Matrix $\bR$. \\
           Importance vector $\bI$. \\
           Number of APs $n$. \\
           Termination threshold $\epsilon$. \\
           Required Number of APs $k$.}
    
    \Output{Value of $\alpha^*$, Optimal value for $\alpha$. \\ 
            Selected APs $\bx^*$ with $\lVert\bx^*\rVert_1=k$.}

    // Initialization 

    $\bx \gets ones(n)$ 
    
    base\_acc $\gets$ Localizer($\bx$) \Comment{Localization with full dataset}
    
    acc $\gets $ base\_acc
    
    $a\gets 0.0$ \Comment{Initialize $\alpha$ range min value}
    
    $b\gets 1.0$\Comment{Initialize $\alpha$ range max value}

     // Search for Optimal APs
     
 \SetKwRepeat{Do}{do}{while}

     \Do{acc - prev\_acc $\geq \epsilon$}{ 
 
        prev\_acc $ \gets $ acc
        
        $\alpha\gets \frac{(a+b)}{2}$ \Comment{$\alpha$ is a floating point number.}
        
        $\bx^*\gets Q^*(\bx,\alpha)$ \Comment{Using quantum/classical  annealing}
        
	  $k'\gets \norm{\bx^*}_1$ \Comment{Number of selected APs}
   
        acc $\gets$ Localizer($\bx^*$) \Comment{Localization accuracy}
        
    \eIf{acc\ $ < $\ base\_acc}{
             $a\gets \alpha$ \Comment{Increase selected APs}
        }{
            
            $b\gets \alpha$
        }
        
     }

    $\alpha^* \gets \alpha$
    
    \Return{$\alpha^*$, $\bx^*$}
    \label{alg:alpha}
\end{algorithm}

\section{Evaluation}
\label{sec:evaluation}

In this section, we evaluate our quantum APs selection algorithm  and implement it on a real D-wave quantum hybrid solver which leverages the D-Wave Advantage quantum processing unit (QPU)~\cite{DwaveS, Dwave} with 5000 qubits~\cite{hs}. We start by describing our real testbed. Then, we show the effect of the balancing parameter $\alpha$. Finally, we compare the proposed quantum algorithm with the classical counterpart.

\subsection{Experiment Setup}
 The collected data cover a building with five floors of a public dataset~\cite{data_paper}. The floors area is around $110\textrm{m}^{2}$~\cite{data_paper}. The total number of APs is 520. We used the already deployed WiFi APs in the building as well as the overheard APs from the neighbors buildings. The total number of samples used in the fingerprint is 10000 samples. Without loss of generality, we used the Random Forest Classifier as a classification model for floor localization~\cite{random_forest}. The classical machine has a 2.2GHz CPU and 12GB RAM.
 
\subsection{Effect of Parameter $\alpha$}

Figure~\ref{alg:alpha} shows the effect of the balancing parameter $\alpha$ on the number of selected APs and the floor localization accuracy. The figure confirms that the  $\alpha$ parameter can control the number of selected APs as we proved in Section~\ref{control_aps}.

\begin{figure}[!t]
    \centering  \includegraphics[width=0.8\linewidth]{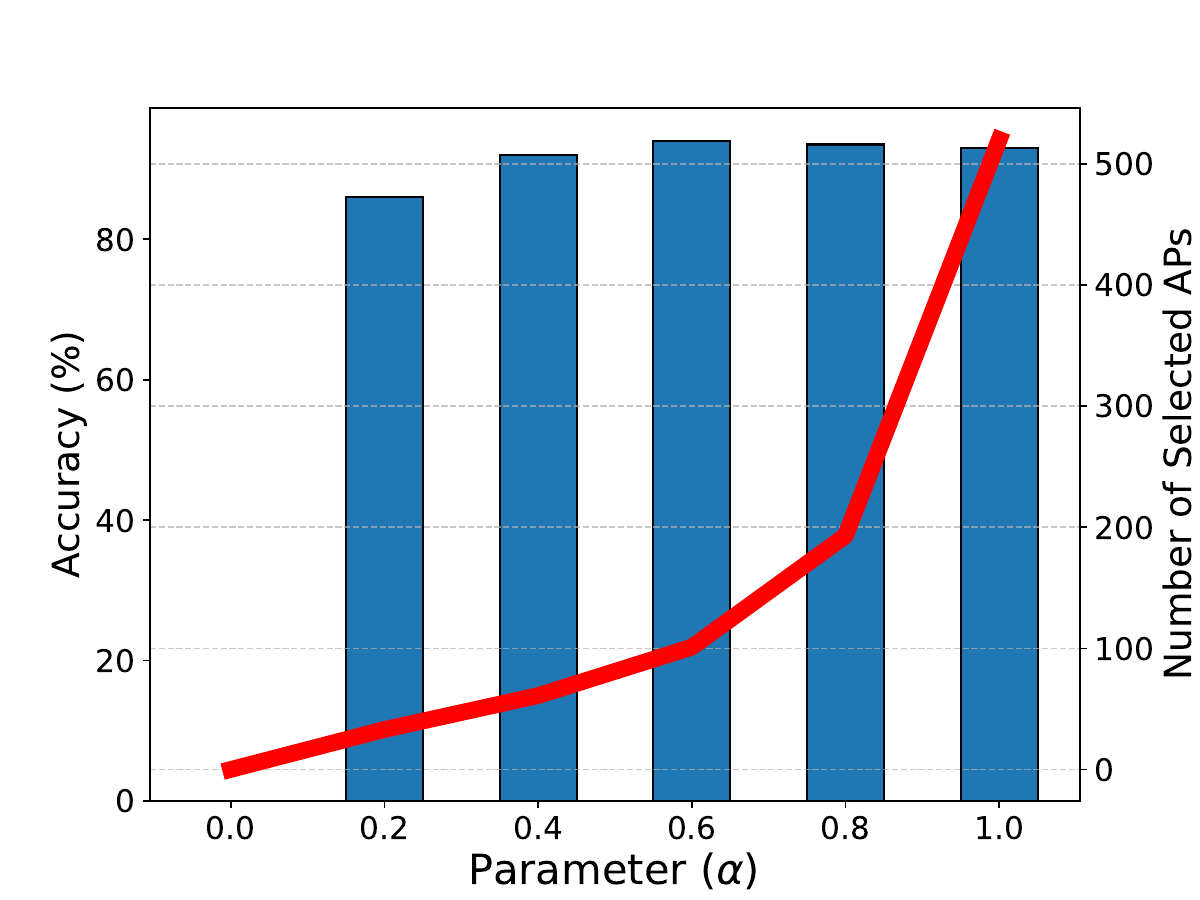}
    \caption{Effect of parameter $\alpha$ on the number of selected APs (red line) and floor localization accuracy (blue bars).} \label{fig:alpha}
\end{figure}

\begin{figure}[!t]
    \centering
\includegraphics[width=0.8\linewidth]{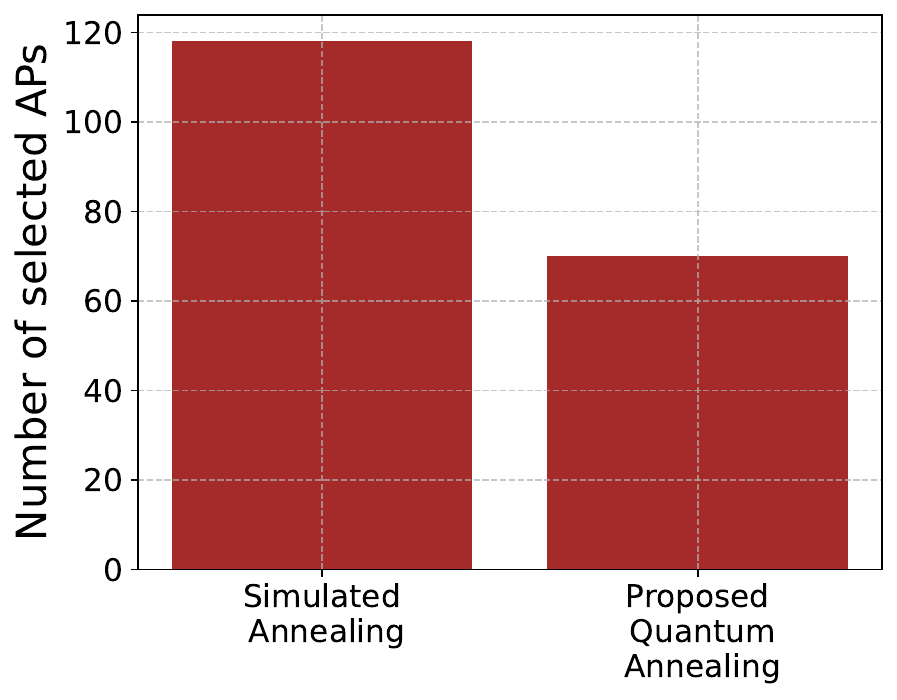}
    \caption{
    Selected number of APs.} \label{fig:s_aps}
\end{figure}

\begin{figure}[!t]
    \centering
\includegraphics[width=0.8\linewidth]{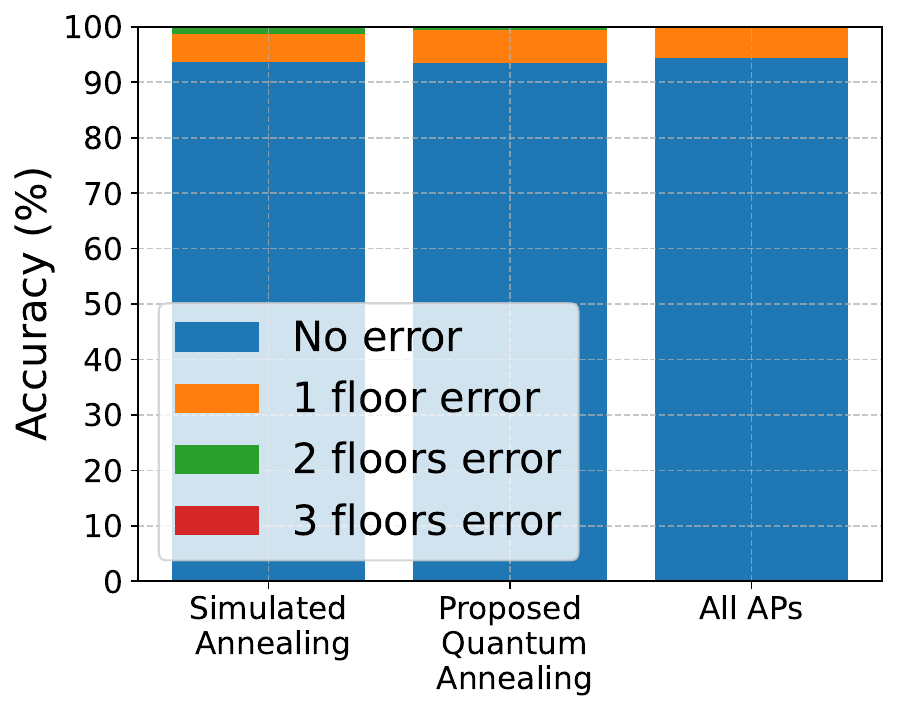}
    \caption{
    Floor localization accuracy using the classical simulated annealing algorithm and the quantum annealing algorithm.} \label{fig:acc}
\end{figure}

\begin{figure}[!t]
    \centering
\includegraphics[width=0.8\linewidth]{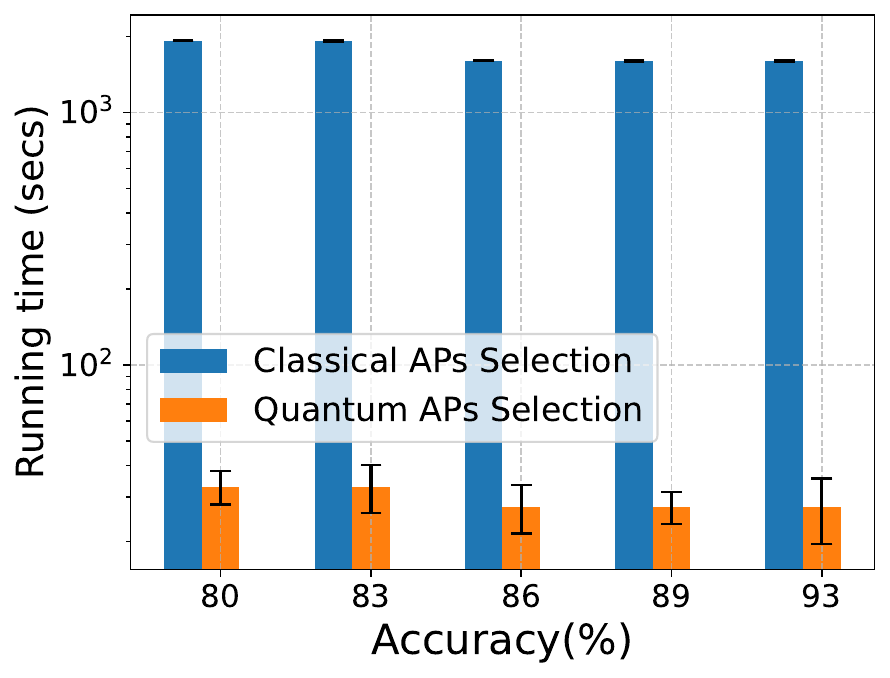}
    \caption{Running time in log-scale of classical and quantum APs selection  algorithms.} \label{fig:time}
\end{figure}

\subsection{Comparison with Classical APs
Selection}
In this section, we compare our proposed quantum annealing APs selection algorithm to the classical simulated annealing algorithm \cite{simulated} under different metrics: number of selected APs, accuracy, and running time. We end the section by showing the importance and redundancy of the selected APs for both algorithms.

\subsubsection{Selected number of APs}
Figure~\ref{fig:s_aps} shows the optimal number of APs obtained by running Algorithm~\ref{alg:alpha} using the classical simulated annealing algorithm and the proposed quantum annealing algorithm. The figure shows that we can achieve the same accuracy as the complete set of APs (520 APs) using 119 APs with simulated annealing  and only 70 APs  with the proposed quantum annealing algorithm. The figure also confirms that the proposed quantum annealing algorithm can reach better effective APs due to quantum tunneling \cite{tunneling}, where the quantum annealing process can penetrate through a potential barrier, enabling exploring the solution space efficiently.

\subsubsection{Accuracy}
Figure~\ref{fig:acc} illustrates the classical simulated annealing floor accuracy (using the selected 119 APs), the proposed quantum annealing accuracy (using the selected 70 APs), and the accuracy of using the complete set of APs (520 APs). Although the three sets of used APs achieve a comparable accuracy, the proposed quantum algorithm achieves a superior floor localization accuracy compared to the simulated annealing accuracy and even the accuracy while using all APs. This highlight the ability of the proposed quantum algorithm to select the set of effective APs that maximize the importance and minimize the redundancy. This helps also in removing noisy APs, which is one reason why the set of APs selected by the quantum annealing algorithm performs better than the full set of APs.   

\subsubsection{Running time}
Figure~\ref{fig:time} further shows the running time of  the classical and the quantum version of Algorithm~\ref{alg:alpha}. The figure highlights that the quantum APs selection algorithm can provide more than an order of magnitude saving in running time than the classical counterpart. 

\subsubsection{Importance and Redundancy}
Figure~\ref{fig:red_imp_full} shows the full set of APs importance and redundancy. The figure shows that the dataset has many redundant (color more towards yellow at \textit{off-diagonal} elements) and irrelevant APs (low importance). Figures~\ref{fig:imp_red_c} and~\ref{fig:imp_red_q} further show the quality of the selected APs using classical simulated annealing and quantum annealing, respectively. The figures confirms that the proposed quantum annealing algorithm capture more of the best APs. This is reflected by showing higher Importance of the selected APs and less yellowish color on the off-diagonal elements in the redundancy matrix.

\begin{figure}[!t]
  \begin{subfigure}{0.23\textwidth}
\includegraphics[width=\linewidth]{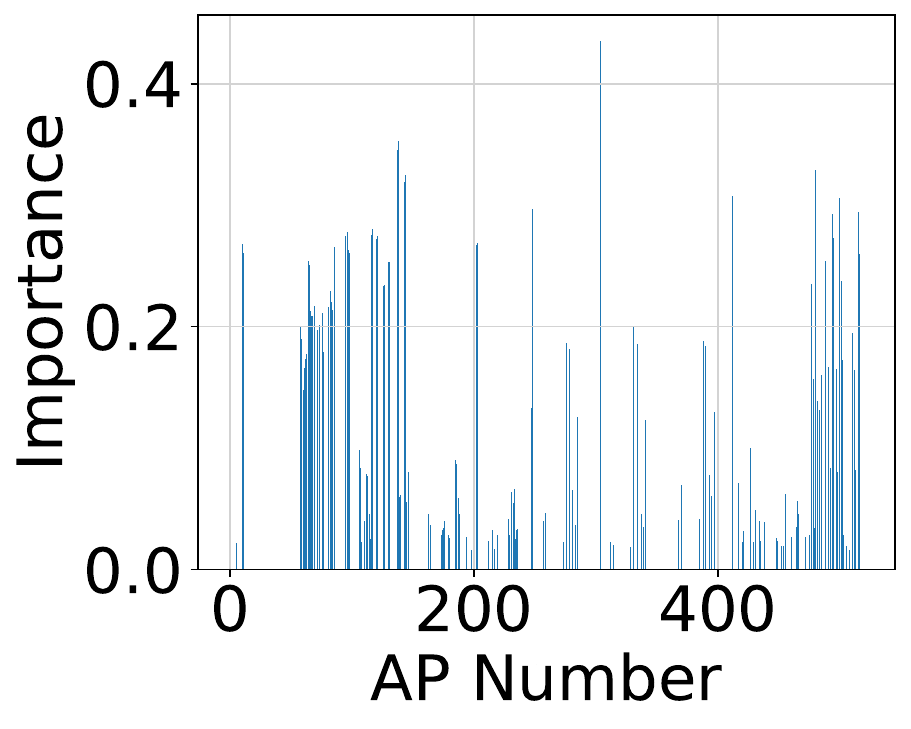}
    \caption{Importance} 
  \end{subfigure}%
  \hspace*{\fill}   
  \begin{subfigure}{0.23\textwidth}
    \includegraphics[width=\linewidth]{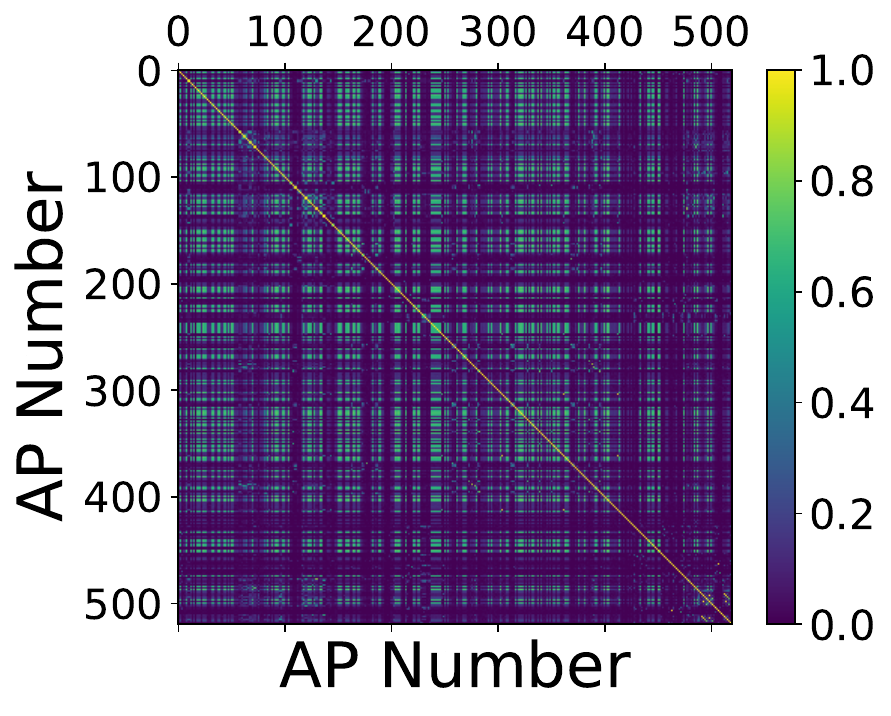}
    \caption{Redundancy} 
  \end{subfigure}%
  \hspace*{\fill}   %
\caption{Importance and Redundancy in all APs dataset (redrawn for convenience).} \label{fig:red_imp_full}
\end{figure}

\begin{figure}[!t]
  \begin{subfigure}{0.23\textwidth}
    \includegraphics[width=\linewidth]{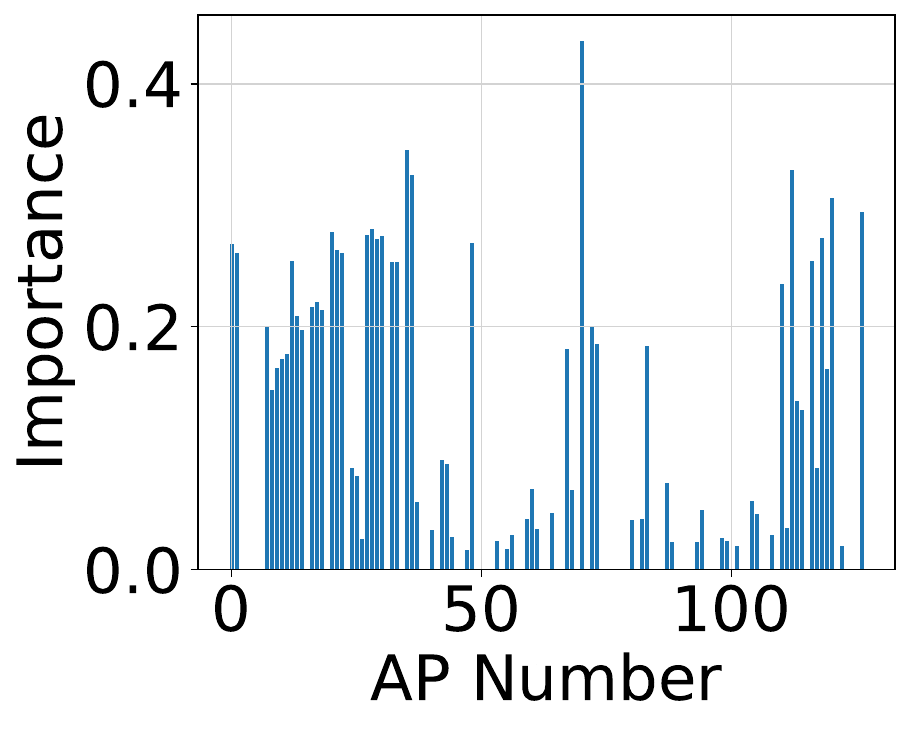}
    \caption{Importance} \label{fig:1a}
  \end{subfigure}%
  \hspace*{\fill}   
  \begin{subfigure}{0.23\textwidth}
    \includegraphics[width=\linewidth]{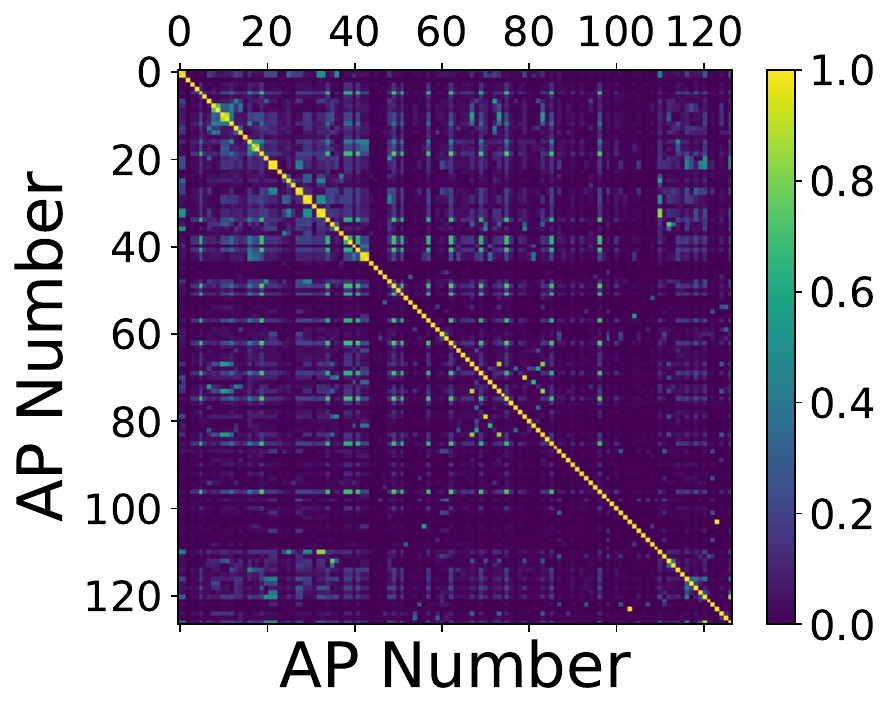}
    \caption{Redundancy} \label{fig:1b}
  \end{subfigure}%
  \hspace*{\fill}   %
\caption{Importance and Redundancy of the 120 selected APs by the simulated annealing.} \label{fig:imp_red_c}
\end{figure}

\begin{figure}[!t]
  \begin{subfigure}{0.23\textwidth}
    \includegraphics[width=\linewidth]{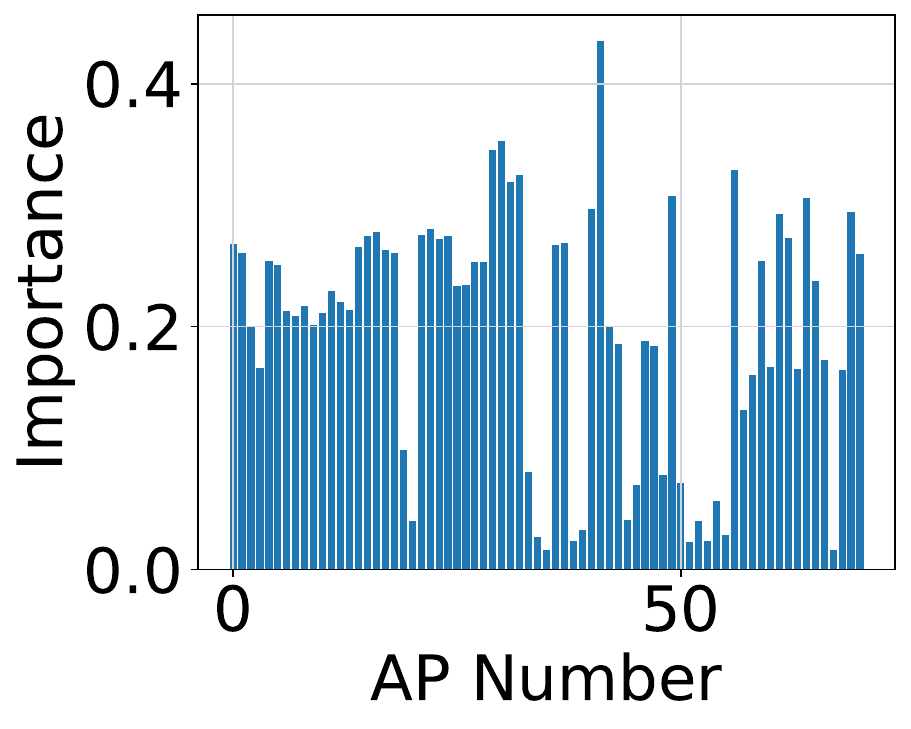}
    \caption{Importance} 
  \end{subfigure}%
  \hspace*{\fill}   
  \begin{subfigure}{0.23\textwidth}
    \includegraphics[width=\linewidth]{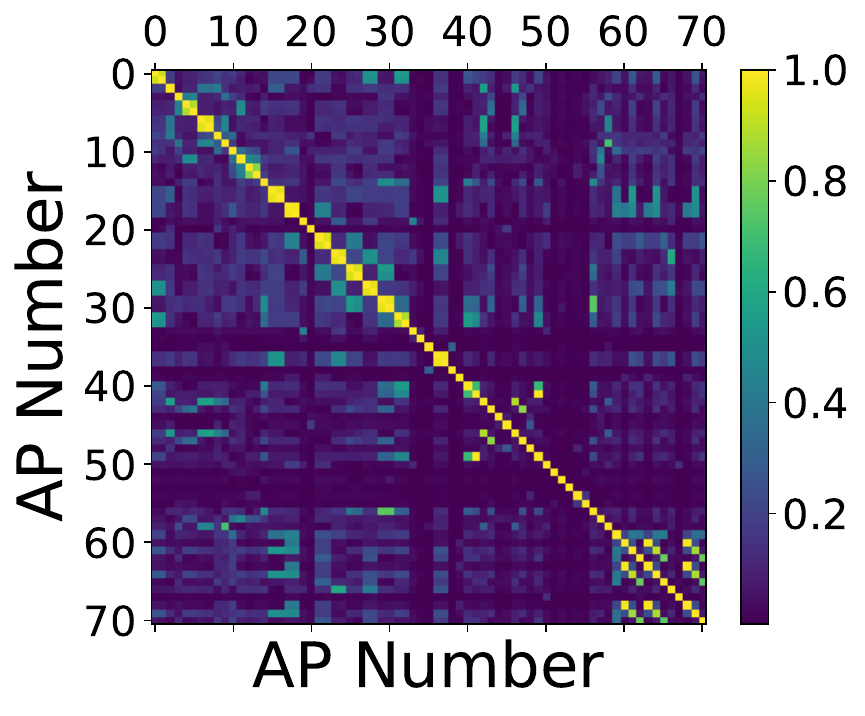}
    \caption{Redundancy} 
  \end{subfigure}%
  \hspace*{\fill}   %
\caption{Importance and Redundancy of the 70 selected APs by the proposed quantum algorithm. } \label{fig:imp_red_q}
\end{figure}

\section{Conclusion}
\label{sec:conclude}

In this paper, we introduced a quantum APs selection  algorithm for large-scale localization systems. The proposed quantum algorithm leverages quantum annealing to eliminate redundant and noisy APs from the fingerprint. We illustrated how to formulate the APs selection problem as a QUBO problem and how to encode the number of powerful APs constraint in the QUBO objective function. 
We deployed our quantum algorithm on a real D-Wave Systems quantum machine and evaluated its performance in a real testbed for floor localization. Our results showed that by selecting fewer than 14\% of the available APs, our quantum algorithm achieves the same floor localization accuracy compared to utilizing the entire set of APs and a better accuracy over utilizing the reduced dataset by the classical APs selection counterpart. Moreover, the proposed quantum algorithm achieved a speedup of more than an order of magnitude  over the corresponding classical APs selection algorithms which highlight the promise of the proposed quantum APs selection algorithm for large-scale localization.

\balance
\bibliographystyle{unsrt}
\bibliography{LCN24.bbl}

\begin{thebibliography}{10}

\bibitem{youssef2015towards}
Moustafa Youssef.
\newblock Towards truly ubiquitous indoor localization on a worldwide scale.
\newblock In {\em ACM SIGSPATIAL.}, pages 1--4, 2015.

\bibitem{shokry2017tale}
Ahmed Shokry, Moustafa Elhamshary, and Moustafa Youssef.
\newblock The tale of two localization technologies: Enabling accurate low-overhead wifi-based localization for low-end phones.
\newblock In {\em ACM SIGSPATIAL}, pages 1--10, 2017.

\bibitem{shokry2018deeploc}
Ahmed Shokry, Marwan Torki, and Moustafa Youssef.
\newblock Deeploc: a ubiquitous accurate and low-overhead outdoor cellular localization system.
\newblock In {\em ACM SIGSPATIAL.}, pages 339--348, 2018.

\bibitem{ibrahim2011cellsense}
Mohamed Ibrahim and Moustafa Youssef.
\newblock Cellsense: An accurate energy-efficient gsm positioning system.
\newblock {\em IEEE Transactions on Vehicular Technology}, 61(1):286--296, 2011.

\bibitem{jovic2015review}
Alan Jovi{\'c}, Karla Brki{\'c}, and Nikola Bogunovi{\'c}.
\newblock A review of feature selection methods with applications.
\newblock In {\em 2015 38th international convention on information and communication technology, electronics and microelectronics (MIPRO)}, pages 1200--1205. Ieee, 2015.

\bibitem{quantum_arx}
Ahmed Shokry and Moustafa Youssef.
\newblock {Challenge: Quantum Computing for Location Determination}.
\newblock {\em arXiv e-prints}, pages arXiv--2106, 2021.

\bibitem{quantum_vision}
Ahmed Shokry and Moustafa Youssef.
\newblock {Towards Quantum Computing for Location Tracking and Spatial Systems}.
\newblock In {\em Proceedings of the 29th International Conference on Advances in Geographic Information Systems}, pages 278--281, 2021.

\bibitem{device_indp_q}
Ahmed Shokry and Moustafa Youssef.
\newblock {Device-independent Quantum Fingerprinting for Large Scale Localization}.
\newblock In {\em 2022 MedComNet}, pages 208--215. IEEE, 2022.

\bibitem{SHOKRY2023}
Ahmed Shokry and Moustafa Youssef.
\newblock Quantum fingerprinting for heterogeneous devices localization.
\newblock {\em Computer Communications}, 2023.

\bibitem{quantum_lcn}
Ahmed Shokry and Moustafa Youssef.
\newblock {A Quantum Algorithm for RF-based Fingerprinting Localization Systems}.
\newblock {\em IEEE Conference on Local Computer Networks (LCN)}, 2022.

\bibitem{quantum_qce}
Ahmed Shokry and Moustafa Youssef.
\newblock {QLoc: A Realistic Quantum Fingerprint-based Algorithm for Large Scale Localization}.
\newblock {\em IEEE International Conference on Quantum Computing and Engineering (QCE)}, 2022.

\bibitem{shokry2023qradar}
Ahmed Shokry and Moustafa Youssef.
\newblock Qradar: A deployable quantum euclidean similarity large-scale localization system.
\newblock In {\em 2023 IEEE 48th Conference on Local Computer Networks (LCN)}, pages 1--8. IEEE, 2023.

\bibitem{zook2023quantum}
Yousef Zook, Ahmed Shokry, and Moustafa Youssef.
\newblock A quantum fingerprinting algorithm for next generation cellular positioning.
\newblock {\em arXiv preprint arXiv:2306.08108}, 2023.

\bibitem{schlosshauer2005decoherence}
Maximilian Schlosshauer.
\newblock Decoherence, the measurement problem, and interpretations of quantum mechanics.
\newblock {\em Reviews of Modern physics}, 76(4):1267, 2005.

\bibitem{QUBO1}
Gary Kochenberger, Jin-Kao Hao, Fred Glover, Mark Lewis, Zhipeng L{\"u}, Haibo Wang, and Yang Wang.
\newblock The unconstrained binary quadratic programming problem: a survey.
\newblock {\em Journal of combinatorial optimization}, 28:58--81, 2014.

\bibitem{QUBO2}
Fred Glover, Gary Kochenberger, and Yu~Du.
\newblock A tutorial on formulating and using qubo models.
\newblock {\em arXiv preprint arXiv:1811.11538}, 2018.

\bibitem{Dwave}
Sergio Boixo, Troels~F R{\o}nnow, Sergei~V Isakov, Zhihui Wang, David Wecker, Daniel~A Lidar, John~M Martinis, and Matthias Troyer.
\newblock Evidence for quantum annealing with more than one hundred qubits.
\newblock {\em Nature physics}, 10(3):218--224, 2014.

\bibitem{nielsen2002quantum}
Michael~A Nielsen and Isaac Chuang.
\newblock Quantum computation and quantum information, 2002.

\bibitem{qa}
Catherine~C McGeoch.
\newblock {\em Adiabatic quantum computation and quantum annealing: Theory and practice}.
\newblock Springer Nature, 2022.

\bibitem{finance}
Rom{\'a}n Or{\'u}s, Samuel Mugel, and Enrique Lizaso.
\newblock Quantum computing for finance: Overview and prospects.
\newblock {\em Reviews in Physics}, 4:100028, 2019.

\bibitem{nath2021review}
Rajdeep~Kumar Nath, Himanshu Thapliyal, and Travis~S Humble.
\newblock A review of machine learning classification using quantum annealing for real-world applications.
\newblock {\em SN Computer science}, 2:1--11, 2021.

\bibitem{griffiths2018introduction}
David~J Griffiths and Darrell~F Schroeter.
\newblock {\em Introduction to quantum mechanics}.
\newblock Cambridge university press, 2018.

\bibitem{ISING}
Fred Glover, Gary Kochenberger, and Yu~Du.
\newblock A tutorial on formulating and using qubo models.
\newblock {\em arXiv preprint arXiv:1811.11538}, 2018.

\bibitem{main_classical_paper}
Litao Han, Li~Jiang, Qiaoli Kong, Ji~Wang, Aiguo Zhang, and Shiming Song.
\newblock {Indoor Localization within Multi-Story Buildings Using MAC and RSSI Fingerprint Vectors}.
\newblock {\em Sensors}, 19(11):2433, May 2019.

\bibitem{fi_fingerprint_2018}
K.~Kim, S.~Lee, and K.~Huang.
\newblock {A scalable deep neural network architecture for multi-building and multi-floor indoor localization based on Wi-Fi fingerprinting.}
\newblock {\em Big Data Analytics}, 3(4), 2018.

\bibitem{fi_fingerprint_ka}
Khaled Alkiek, Aya Othman, Hamada Rizk, and Moustafa Youssef.
\newblock {Deep Learning-Based Floor Prediction Using Cell Network Information}.
\newblock In {\em Proceedings of the 28th International Conference on Advances in Geographic Information Systems}, SIGSPATIAL '20, page 663–664, New York, NY, USA, 2020. Association for Computing Machinery.

\bibitem{VIFI_fingerprint_2019}
Giuseppe Caso, Luca De~Nardis, Filip Lemic, Vlado Handziski, Adam Wolisz, and Maria-Gabriella~Di Benedetto.
\newblock {ViFi: Virtual Fingerprinting WiFi-Based Indoor Positioning via Multi-Wall Multi-Floor Propagation Model}.
\newblock {\em IEEE Transactions on Mobile Computing}, 19(6):1478--1491, 2020.

\bibitem{cramer1999mathematical}
Harald Cram{\'e}r.
\newblock {\em Mathematical methods of statistics}, volume~26.
\newblock Princeton university press, 1999.

\bibitem{pearson}
Philip Sedgwick.
\newblock Pearson’s correlation coefficient.
\newblock {\em Bmj}, 345, 2012.

\bibitem{lagrange}
Brian Beavis and Ian Dobbs.
\newblock {\em Optimisation and stability theory for economic analysis}.
\newblock Cambridge university press, 1990.

\bibitem{FS}
Sascha M{\"u}cke, Raoul Heese, Sabine M{\"u}ller, Moritz Wolter, and Nico Piatkowski.
\newblock Feature selection on quantum computers.
\newblock {\em Quantum Machine Intelligence}, 5(1):11, 2023.

\bibitem{DwaveS}
D-Wave Systems.
\newblock {\em Hybrid Solvers for Quadratic Optimization}.
\newblock 2022.

\bibitem{hs}
D-Wave Systems.
\newblock {\em D-Wave Hybrid Solver Service + Advantage: Technology Update}.
\newblock 2022.

\bibitem{data_paper}
Joaquín Torres-Sospedra, Raúl Montoliu, Adolfo Martínez-Usó, Joan~P. Avariento, Tomás~J. Arnau, Mauri Benedito-Bordonau, and Joaquín Huerta.
\newblock {UJIIndoorLoc: A new multi-building and multi-floor database for WLAN fingerprint-based indoor localization problems}.
\newblock In {\em 2014 International Conference on Indoor Positioning and Indoor Navigation (IPIN)}, pages 261--270, 2014.

\bibitem{random_forest}
Tin~Kam Ho.
\newblock Random decision forests.
\newblock In {\em Proceedings of 3rd international conference on document analysis and recognition}, volume~1, pages 278--282. IEEE, 1995.

\bibitem{simulated}
Peter~JM Van~Laarhoven, Emile~HL Aarts, Peter~JM van Laarhoven, and Emile~HL Aarts.
\newblock {\em Simulated annealing}.
\newblock Springer, 1987.

\bibitem{tunneling}
Siddharth Muthukrishnan, Tameem Albash, and Daniel~A Lidar.
\newblock Tunneling and speedup in quantum optimization for permutation-symmetric problems.
\newblock {\em Physical Review X}, 6(3):031010, 2016.

\end{thebibliography}

\begin{appendices}

\section{Proof of lemma ~\ref{prop:alpha}}
\label{sec:proof}

	Recall that the both the redundancy $R(\bx)$ and the importance $I(\bx)$ are $\geq 0$ due to non-negativity of the absolute value of the Pearson correlation and the Cramer's V measurements, respectively.
 
	When $\alpha=0$, the objective function in Eq~\ref{eq:Q}, $Q(\cdot, 0)$, tries to minimize the redundancy, leading to either the zero vector $\bar{\bm{0}}$ as the optimal solution or one of one-hot encoding vectors with the index of AP $i$ is one and zero otherwise, $\forall i\in \{1,2,3,\dots,n\}$ is optimal. This covers the cases of not selecting any AP ($k=0$) or selecting a single AP ($k=1$).
	
    When $\alpha=1$, the objective function in Eq~\ref{eq:Q}, becomes 
    \begin{equation}
        Q(\bx,1)=-I(\bx)=-\sum_{i\in \{1,2,\dots,n\}}I_ix_i
    \end{equation}
    which is minimized by a vector of ones  $\bar{\bm{1}}$, covering the case of selecting all APs ($k=n$).
	
    Now, for all $k\in\lbrace 0,\dots,n\rbrace$, consider the functions, \begin{equation}
		Q^*_{\leq k}(\alpha) := \min_{\bx\in\{0,1\}^n}Q_{\alpha}(\bx)~\text{s.t. } \norm{\bx}_1\leq k\;.
	\end{equation}
	These functions are piece-wise linear and strictly decreasing in $\alpha$~\cite{FS}, due to 
     \begin{equation}
         \frac{\partial Q_{\alpha}(\bx)}{\partial\alpha} = -(R(\bx)+I(\bx))\leq 0
     \end{equation}
      
	This implies further for all $\alpha\in [0,1]$ and $k\in \{1,2,3,\dots,n\}$ that \begin{align}
		\min_{\substack{\bx\in\{0,1\}^n \\ \norm{\bx}_1\leq k-1}} R(x) \leq \min_{\substack{\bx\in\{0,1\}^n \\ \norm{\bx}_1\leq k}} R(x) \\
		\max_{\substack{\bx\in\{0,1\}^n \\ \norm{\bx}_1\leq k-1}} I(x) \leq \max_{\substack{\bx\in\{0,1\}^n \\ \norm{\bx}_1\leq k}} I(x)\;.
	\end{align}
 
	Hence, for any $k<k'$, unless $Q^*_{\leq k}$ = $Q^*_{\leq k'}$,
\begin{equation}
    Q^*_{\leq k'}(\alpha)\leq Q^*_{\leq k}(\alpha)
\end{equation}
as a consequence of $Q^*_{\leq k}$ and $Q^*_{\leq k'}$ being non-increasing, from which follows the proof. If indeed $Q^*_{\leq k}$ = $Q^*_{\leq k'}$, both binary vectors $\bx$ and $\bx'$ with $\norm{\bx}_1=k$ and $\norm{\bx'}_1=k'$ would be optimal, from which the proof still follows. \qedsymbol{}

\end{appendices}

\end{document}